\newcommand*{\defeq}{\stackrel{\text{def}}{=}}
\newcommand{\ceiling}[1]{\left\lceil{#1}\right\rceil}
\newcommand{\setof}[1]{\left\{{#1}\right\}}
\newcommand{\defn}[1]       {{\textit{\textbf{\boldmath #1}}}}
 \def\myendproof{{\ \vbox{\hrule\hbox{%
   \vrule height1.3ex\hskip0.8ex\vrule}\hrule }}\par}
\newtheorem{theorem}{Theorem}
\newtheorem{lemma}{Lemma}
\newtheorem{example}{Example}
\newtheorem{definition}{Definition}
\newtheorem{observation}{Observation}
\tikzstyle{materia}=[draw, fill=white, text width=1.0em, text centered,
\tikzstyle{practica} = [materia, text width=18em, minimum width=8em,
\tikzstyle{texto} = [above, text width=6em, text centered]
\tikzstyle{linepart} = [draw, thick, color=blue!50, -latex', dashed]
\tikzstyle{line} = [draw, line width = 2pt, color=blue!50, -latex']
\tikzstyle{ur}=[draw, text centered, minimum height=0.01em]
\newcommand{\georg}[1]{}
\title{Reservation-Based Federated Scheduling for  Parallel Real-Time Tasks}
\begin{document}

\author{Niklas Ueter$^1$, Georg von der Br\"uggen$^1$, Jian-Jia Chen$^1$,  Jing Li$^2$, and
  Kunal Agrawal$^3$\\
$^1$TU Dortmund University, Germany\\
$^2$New Jersey Institute of Technology, U.S.A\\
$^3$Washington University in St. Louis, U.S.A}

\maketitle
\begin{abstract}
This paper considers the scheduling of parallel real-time tasks with arbitrary-deadlines. Each job of a parallel task is described as a directed acyclic graph
(DAG).  In contrast to prior work in this area, where decomposition-based 
scheduling algorithms are proposed based on the DAG-structure and inter-task 
interference is analyzed as self-suspending behavior, this paper generalizes the federated scheduling approach.
We propose a reservation-based algorithm, called reservation-based federated scheduling, that dominates federated scheduling. We provide general constraints for the design of such systems and prove that reservation-based federated scheduling has a constant speedup factor with respect to any optimal DAG task scheduler.
Furthermore, the presented algorithm can be used in conjunction with any scheduler and scheduling analysis suitable for ordinary arbitrary-deadline sporadic task sets, i.e., without parallelism.   
\end{abstract}

\section{Introduction}
A frequently used model to describe real-time systems is with a collection of 
independent tasks that release an infinite sequence of jobs according to 
some parameterizable release pattern.
The sporadic task model, where a task $\tau_i$ is characterized by its
relative deadline $D_i$, its minimum inter-arrival time $T_i$, and its
worst-case execution time (WCET) $C_i$, has been widely adopted for 
real-time systems. 
A sporadic task is an infinite sequence of task instances, referred 
to as \emph{jobs}, where the arrival of  two consecutive jobs of a task
is separated at least by its minimum inter-arrival time.
In real-time systems, tasks must fulfill timing requirements, i.e., each job
must finish at most $C_i$ units of computation between the arrival of a job at 
$t_a$ and that jobs absolute deadline at $t_a+D_i$.
A sporadic task system ${\tau}$ is called an \defn{implicit-deadline}
system if $D_i = T_i$ holds for each $\tau_i$ in ${\tau}$, and is 
called a \defn{constrained-deadline} system if $D_i \leq T_i$
holds for each $\tau_i$ in ${\tau}$.  Otherwise, such a sporadic task system
${\tau}$ is an \defn{arbitrary-deadline} system.

Traditionally, each task $\tau_i$ is only associated with its
worst-case execution time (WCET) $C_i$, since in uniprocessor platforms 
the processor executes only one job at each point in time and there is no need
to express potential parallel execution paths.
However, modern real-time systems increasingly employ multi-processor platforms to
suffice the increasing performance demands and the need for energy efficiency.
Multi-processor platforms allow both \emph{inter-task parallelism}, 
i.e., to execute sequential programs concurrently,
and \emph{intra-task parallelism}, 
i.e., a job of a parallelized task can be executed on multiple processors at the
same time.
To enable {intra-task parallelism}, programs
are expected to be potentially executed in parallel which must be enabled by the
software design. 
An established model for parallelized tasks is the \defn{Directed-Acyclic-Graph (DAG)} model. 
Through out this paper, we consider how to schedule a sporadic DAG
task set $\tau$ on a multi-processor system with $M$ homogeneous processors.

\subsection*{Task Model}
The Directed-Acyclic-Graph (DAG) model for a parallel task
expresses intra-task dependencies and which subtasks can potentially be
executed in parallel~\cite{saifullah2014parallel,DBLP:conf/ecrts/LiALG13,DBLP:conf/ecrts/BonifaciMSW13}.
In particular, the execution of a task $\tau_i$ can be
divided into subtasks and the precedence constraints of these subtasks
are defined by a DAG structure. An example is presented
in~Figure~\ref{fig:example}. Each node represents a subtasks and the directed
arrows indicate the precedence constraints. Each node
  is characterized by the worst-case execution time of the
  corresponding subtask. 


\begin{figure}[t]
\includegraphics[width=0.5\textwidth]{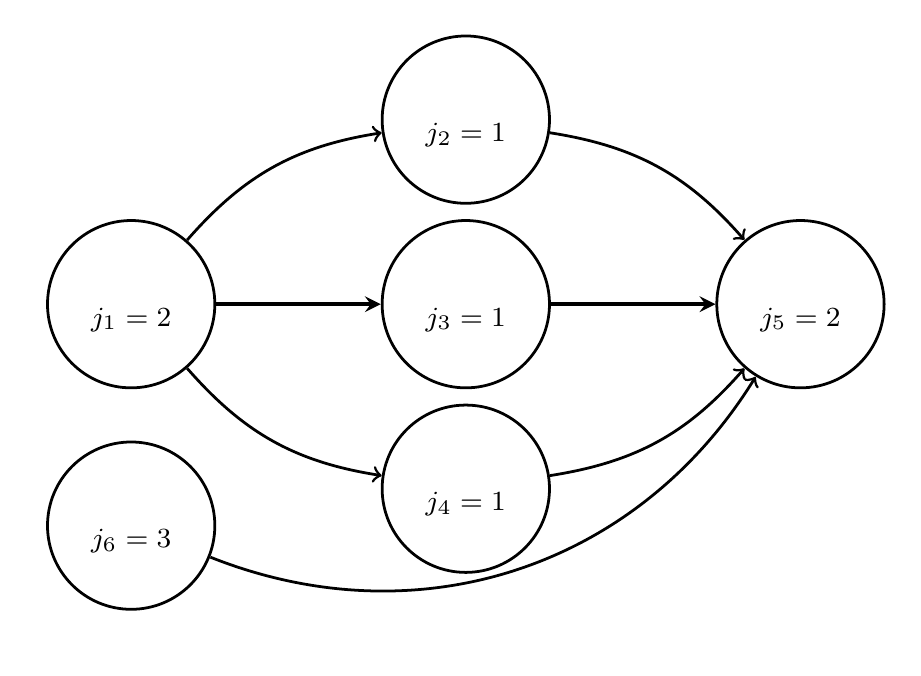}
\caption{A sporadic, constrained-deadline DAG task with \mbox{$C_i=10$}, $L_i =
5, D_i = 9, T_i = 12$.}
\label{fig:example}
\end{figure}

For a DAG, two parameters are of importance:
\begin{itemize}
\item \defn{total execution time} (or \defn{work}) $C_i$ of task $\tau_i$:
  the summation of the worst-case execution times of all the
  subtasks of task $\tau_i$.  
\item \defn{critical-path length} $L_i$ of task $\tau_i$: 
  the length of the critical path in the given DAG, i.e., 
  the worst case execution time of the task on an infinite number of
  processors. 
\end{itemize}
By definition, $C_i \geq L_i > 0$ for every task $\tau_i$. The \defn{utilization} 
  of task $\tau_i$ is denoted by $U_i=\frac{C_i}{T_i}$. 

This way of parametrization has the advantage 
to be completely
agnostic of the internal parallelization structure, i.e., how many
sub-tasks exist and how the precedence constraints amongst them
are. Scheduling algorithms that can feasibly schedule DAG task sets
solely  based on these two parameters
also 
allow the change
of the DAG structure during runtime as long as those constraints are
met.  The apparent downside of this abstraction is the pessimism, since the
worst possible structure has to be considered regardless of the actual 
structure, and the scheduling algorithms have to suffice the tasks deadline for
all possible structures under given parameter constraints.

\subsection*{Related work}

The scheduling of parallel real-time DAG tasks
has been widely researched in various directions.
To the best of our knowledge, 
three general scheduling approaches exist:
\begin{itemize}
\item \emph{No treatment}: The DAG structure and parameters of a
   task are not utilized or used at all for
  scheduling decisions. Whenever a subtask of task $\tau_i$ is ready to be
  executed, the standard global or partitioned multiprocessor
  scheduling is used to schedule the subtasks, e.g., 
  \cite{DBLP:conf/opodis/AnderssonN12,DBLP:conf/ecrts/BonifaciMSW13,DBLP:conf/ecrts/LiALG13,Li:ECRTS14}.

\item \emph{Decomposition-based strategies}: A DAG task is decomposed
  into a set of sequential tasks with specified relative deadlines and
  offsets of their release times.  These sequential tasks are then
  scheduled accordingly without considering the DAG structure anymore,
  e.g.,
  \cite{Lakshmanan:2010:SPR:1935940.1936239,DBLP:conf/rtss/SaifullahALG11,DBLP:conf/iccps/KimKLR13,DBLP:conf/ecrts/NelissenBGM12,DBLP:conf/rtss/JiangLGW16,saifullah2014parallel}.
  Decomposition-based strategies utilize the DAG structure off-line in
  order to apply the decomposition.
\item \emph{Federated scheduling}: The task set is partitioned into
  light and heavy tasks.  Light tasks are those, that can be
  completely sequentialized and still fit on one processor.  
  On the other hand, a task that needs 
  more than one processor to meet its deadline is a heavy task. In the
  original design of federated scheduling for implicit-deadline task
  systems proposed by Li et al.~\cite{Li:ECRTS14}, a light task is
  solely executed \emph{sequentially} without exploiting the
  parallelized structure, and a heavy task is assigned to its
  designated processors that \emph{exclusively} execute only the heavy
  task. Baruah~\cite{DBLP:conf/date/Baruah15,DBLP:conf/ipps/Baruah15,DBLP:conf/emsoft/Baruah15}
  adopted the concept of federated scheduling for scheduling
  constrained-deadline and arbitrary-deadline task
  systems. Chen~\cite{Chen:2016:FSA:2990336.2990421} later showed that
  federated scheduling does not admit any constant speedup factor with
  respect the optimal scheduling algorithm. Jiang et
  al.~\cite{RTSS-2017-semi-federated} extended the federated
  scheduling approach to semi-federated scheduling,
  in which one or two processors used by a heavy task can be shared with other tasks.
\end{itemize}

\subsection*{Contributions}
A downside of {federated scheduling} is the granting of
processors to heavy tasks exclusively, thus alleviating the potential
to map light tasks onto the same processors. To address these limitations, 
this paper provides the following results:
\begin{itemize}
\item {We propose a
\defn{reservation-based federated scheduling} for DAG tasks that provides provably sufficient
amount of service for each DAG task to meet its relative deadline and
provides a simple, timing isolated interface for analysis.
That means, the DAG task can be treated like an arbitrary- or constrained-deadline, 
sporadic real-time task analytically.
Hence we show how to reduce the problem of scheduling sporadic, arbitrary-deadline 
DAG tasks to the problem of scheduling sequential sporadic, arbitrary-deadline tasks.}

\item {Specifically, we provide algorithms to transform a set of sporadic, arbitrarily-deadline 
DAG tasks into a set of sequential sporadic, arbitrary-deadline real-time tasks that 
can be scheduled by using any scheduling algorithm, that supports the aforementioned task model.}

\item {Moreover, we provide general design rules and constraints for providing provably 
sufficient and heuristically good reservations for use in Partitioned (and Global)
Scheduling algorithms.}

\item {We further resolve the problem of non-constant speedup factors of federated scheduling for arbitrary-deadline 
DAG task sets with respect to any optimal scheduling algorithm that was pointed
out by Chen~\cite{Chen:2016:FSA:2990336.2990421}.
We show, that this speedup factor is at most $2+\sqrt{3}$ by the setting of 
a specific workload inflation.}
\end{itemize}

\section{Issues of Federated Scheduling for Constrained-Deadline Systems}

Here, we reuse the example presented by
Chen~\cite{Chen:2016:FSA:2990336.2990421} to explain the main issue of
applying federated scheduling for constrained-deadline task systems.
Suppose that $M \geq 2 $ is a positive integer. Moreover, let $K$ be
any arbitrary number with $K \geq 2$.  We create $N$
constrained-deadline sporadic tasks with the following setting:
\begin{itemize}
\item $C_1=M$, $D_1=1$, and $T_1\rightarrow \infty$.
\item $C_i=K^{i-2} (K-1)M$, $D_i=K^{i-1}$, and $T_i=\infty$ for $i=2,3,\ldots, N$.
\end{itemize}
Table~\ref{tab:example} provides a concrete example for $N=10$, $M=10$
and $K=2$.
Each task $\tau_i$ has $M$ subtasks, there is no precedence
constraint among these $M$ subtasks (which is a special case of DAG), and each
subtask of task $\tau_i$ has the worst-case execution time of
$\frac{C_i}{M}$.

An obviously feasible schedule is to assign each subtask of task
$\tau_i$ to one of the $M$ processors. However, as task $\tau_1$ can
only be feasibly scheduled by running on all the $M$ processors in
parallel, federated scheduling \emph{exclusively}
allocates all the $M$ processors to task $\tau_1$. Similarly, the
semi-federated scheduling in \cite{RTSS-2017-semi-federated} also
suffers from such exclusive allocation.

\begin{table}[t]
  \centering
  \begin{tabular}{|c||c|c|c|c|c|c|c|c|c|c|c|}
  \hline
   & $\tau_1$   & $\tau_2$  & $\tau_3$   & $\tau_4$   & $\tau_5$   & $\tau_6$   & $\tau_7$   & $\tau_9$   & $\tau_{10}$\\
   \hline
 $C_i$ & 10 & 10 & 20 & 40 & 80 & 160 & 320 & 640 & 1280\\
\hline
 $D_i$ & 1  & 2   & 4   & 8    & 16 & 32  & 64 & 128 & 256\\
  \hline    
 $T_i$ & \multicolumn{9}{c|}{$\rightarrow \infty$}\\
  \hline    
  \end{tabular}
  \caption{An example of the task set $\tau$ when $N=10$, \mbox{$M=10$}, and
  $K=2$, from \cite{Chen:2016:FSA:2990336.2990421}}
  \label{tab:example}
\end{table}

From this example, we can see that the main issue of applying federated scheduling for
constrained-deadline task systems is the exclusive allocation of heavy
tasks. Such a heavy task may need a lot of processors due to its short
relative deadline, but have very low utilization in the long run if
its minimum inter-arrival time is very long. Allocating many
processors to such a heavy task results in a significant waste of resources.

Our proposed approach in
this paper is to use reservation-based allocation instead of exclusive
allocation for heavy tasks. Therefore, instead of dedicating a few
processors to a heavy task, we assign a few reservation servers to a
heavy task. The timing properties of a DAG task will be guaranteed as
long as the corresponding reservations can be guaranteed to be
feasibly provided. We will detail the concept in the next section.

\section{Reservation-Based Federated Scheduling}

An inherent difficulty when analyzing the schedulability of DAG task systems is 
the intra-task dependency in conjunction with the inter-task interference.
Federated scheduling avoids this problem by granting a subset of available
processors  to heavy tasks exclusively and therefore avoiding inter-task interference.
A natural generalization of the federated scheduling approach is to reserve
sufficient  resources to heavy tasks exclusively.
This approach combines the advantage of avoiding inter-task interference and
self-suspension  with the possibilities to fit the amount of resources required more precisely.
The reservation-based federated approach requires to quantify the maximum
computation demand a DAG task can generate 
over any interval and
quantify the sufficient amount of resources during that interval.

\subsection{Basic Concepts}

In this paper, we enforce the reservations to be provided
synchronously with the release of a DAG task's job. This means,  
whenever a DAG task releases a job at $t_0$, the 
associated service is provided during the release- and 
deadline-interval $[t_0,t_0+D_i)$.
In order to provide a well known interface, the service providing 
reservations are modeled as an ordinary sporadic, arbitrary-deadline task 
more formally described in the following definition.

\begin{definition}
A reservation generating sporadic task $\tau_{i,j}$ for serving a DAG task
$\tau_i$  is defined by the tuple $(E_{i,j},D_i,T_i)$, such that $E_{i,j}$ is the
amount of  computation reserved over the interval $[t_0,t_0+D_i)$ with a 
minimum inter-arrival time of $T_i$.
\end{definition}


Over an interval of $[t_0, t_0+D_i)$ where $t_0$ denotes the release
of a job of the DAG task $\tau_i$, we create $m_i$ instances (jobs) of
sporadic real-time \emph{reservation servers} released with execution
budgets $E_{i,1},E_{i,2},..,E_{i,m_i}$ and relative deadline $D_i$,
that are scheduled according to some scheduling algorithm on a
homogeneous multiprocessor system with $M$ processors.
Moreover, the jobs that are released at time $t_0$ by the reservation
servers are only used to serve the DAG job of task $\tau_i$ that  arrived at
time $t_0$. Especially, they are not used to serve any other jobs of task
$\tau_i$ that arrived after $t_0$.  The operating system can apply any scheduling
strategy to execute the $m_i$ instances. If an instance of a reservation server
reserved for task $\tau_i$ is executed at time $t$, we say that the
system provides (or alternatively the reservation servers provide)
service to run the job of task $\tau_i$ arrived at time $t_0$. 
On the other hand, the $m_i$ reservation servers do not provide any service at
time $t$ if none of them is executed at time $t$ by the scheduler.

The scheduling algorithm for DAG is \emph{list scheduling}, which is
workload-conserving with respect to the service provided by the
reservation servers.  Namely, at every point in time in which the DAG task has
pending workload and the system provides service (to run a reservation server), the
workload is executed.

In conclusion, the problem of scheduling DAG task sets and the analysis thereof is hence 
divided into the following two problems:
\begin{enumerate}
  \item Scheduling of sporadic, arbitrary-deadline task sets.
  \item Provide provably sufficient reservation to service a set of arbitrary DAG tasks.
\end{enumerate}

\begin{theorem}
  \label{thm:response-time-bound-one-job}
  Suppose that $m_i$ sequential instances (jobs) of real-time reservation servers
  are created and released for serving a DAG task $\tau_i$ with
  execution budgets $E_{i,1},E_{i,2},..,E_{i,m_i}$ when a job of task
  $\tau_i$ is released at time $t_0$. The job of task $\tau_i$ arrived
  at time $t_0$ can be finished no later than its absolute deadline
  $t_0+D_i$ if
  \begin{itemize}
  \item \textbf{[Schedulability Condition]}: the $m_i$ sequential jobs of the reservation servers can
    be guaranteed to finish no later than their absolute deadline at $t_0+D_i$, and
  \item \textbf{[Reservation Condition]}: $C_i + L_i \cdot (m_i-1) \leq \sum_{j=1}^{m_i} E_{i,j}$.
  \end{itemize}
\end{theorem}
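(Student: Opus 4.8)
The plan is to adapt Graham's classical list-scheduling bound to the setting where the $m_i$ ``processors'' are the reservation servers, which supply service only intermittently. Throughout, fix the job $J$ of $\tau_i$ released at $t_0$, and for $t \in [t_0, t_0+D_i)$ let $a(t) \in \{0,1,\dots,m_i\}$ be the number of reservation-server instances executing at time $t$ (the instantaneous service provided to $J$) and $b(t) \le a(t)$ the number of subtasks of $J$ running at $t$. Since list scheduling is work-conserving with respect to the provided service, $b(t)=a(t)$ whenever $J$ has at least $a(t)$ ready subtasks; call $t$ a \emph{slack} instant if $a(t)>0$ and $b(t)<a(t)$, which forces every ready subtask of $J$ to be running at $t$. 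From the \textbf{Schedulability Condition}, each reservation-server job delivers its full budget $E_{i,j}$ no later than $t_0+D_i$, so $\int_{t_0}^{t_0+D_i} a(t)\,dt \ge \sum_{j=1}^{m_i} E_{i,j}$; i.e., at least $\sum_{j} E_{i,j}$ units of service are provided to $J$ over its scheduling window.

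Next I would argue by contradiction: suppose $J$ is not complete at $t_0+D_i$. Let $v_\ell$ be the subtask of $J$ that finishes last, and build a precedence chain $v_1 \to v_2 \to \dots \to v_\ell$ backwards. Given $v_q$, let $s_q$ be the first time it executes; if there is a slack instant just before $s_q$, then $v_q$ was ready but not running, contradicting work-conservation, so some predecessor of $v_q$ completes exactly at $s_q$ — take $v_{q-1}$ to be the predecessor of $v_q$ that finishes \emph{last}; otherwise (no slack just before $s_q$, or $v_q$ is a source) stop. The key lemma is: at every slack instant $t\in[t_0,t_0+D_i)$ some chain node is executing. Indeed, during any interval strictly between the completion of $v_{q-1}$ and the start $s_q$ of $v_q$, all predecessors of $v_q$ are finished (since $v_{q-1}$ is the last-finishing one), so $v_q$ is ready but not running, ruling out slack; the analogous statement holds before $v_1$ by the stopping rule, and whenever a chain node is preempted all other active slots must be busy. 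Since the chain nodes lie on a single path of the DAG, the total duration during which some chain node executes is at most $\sum_q C(v_q) \le L_i$.

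Finally I would do the accounting over $[t_0,t_0+D_i)$: $\int a(t)\,dt = \int b(t)\,dt + \int (a(t)-b(t))\,dt$. The first term is the work of $J$ executed so far, which is strictly less than $C_i$ because $J$ is still incomplete. The second (``wasted'') term is supported on slack instants, where by the key lemma a chain node runs, so there $a(t)-b(t) \le a(t)-1 \le m_i-1$; hence it is at most $(m_i-1)$ times the total chain-execution duration, i.e., at most $(m_i-1)L_i$. Combining, $\sum_{j=1}^{m_i} E_{i,j} \le \int_{t_0}^{t_0+D_i} a(t)\,dt < C_i + (m_i-1)L_i$, contradicting the \textbf{Reservation Condition}. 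Therefore $J$ must complete no later than $t_0+D_i$.

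I expect the main obstacle to be the key lemma — the faithful transfer of Graham's argument to this model, in particular verifying that the ``gaps'' between consecutive chain nodes (and the possibility of preemption or migration of subtasks) never contain slack, which is exactly the point where choosing the \emph{last-finishing} predecessor is essential. The bookkeeping with a time-varying service level $a(t)$ in place of a constant number of processors needs some care as well, but is routine once the lemma is established.
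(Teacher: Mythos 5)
Your proposal is correct and follows essentially the same route as the paper's proof: both are Graham-style arguments that build a chain backwards via last-finishing predecessors, show that service can only be ``wasted'' while a chain node is executing (bounding the waste by $(m_i-1)L_i$ since the chain lies on a path of length at most $L_i$), and derive $\sum_j E_{i,j} < C_i + (m_i-1)L_i$ for a contradiction. The only cosmetic differences are your use of instantaneous service functions $a(t),b(t)$ versus the paper's cumulative $\beta_i(\cdot,\cdot,S)$ over the explicit intervals $[f_{j-1},\theta_j)$ and $[\theta_j,f_j)$, and you should anchor the chain at a subjob executed at the last service instant (rather than ``the subtask that finishes last,'' which may not exist for an incomplete job) and stop only at a source node, exactly as the paper does.
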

\begin{proof}
  We consider an arbitrary execution schedule $S$ of the $m_i$
  sequential jobs executed from $t_0$ to $t_0+D_i$.  Suppose,
  \textit{for contradiction}, that the reservation condition holds but there is an unfinished subjob of
  the DAG job of task $\tau_i$ at time $t_0+D_i$ in 
  $S$. 
  Since the list scheduling algorithm is applied, the schedule for a
  DAG job is under a certain topological order and is
  workload-conserving. That is, unless a DAG job has finished at time
  $t$, whenever the system provides service to the DAG job, one of its
  subjobs is executed at time $t$.

  We define the following terms based on the execution of the DAG job
  of task $\tau_i$ arrived at time $t_0$ in the schedule $S$.  Let the
  last moment prior to $t_0+D_i$ when the system provides service to
  the DAG job be $f_\ell$ in the schedule $S$. Moreover, $c_\ell$ is a
  subjob of task $\tau_i$ executed at $f_\ell$ in 
  $S$. Let $\theta_\ell$ be the earliest time in 
  $S$ when
  the subjob $c_\ell$ is executed. After $\theta_\ell$ is determined,
  among the predecessors of $c_\ell$, let the one finished
  \emph{last} in the schedule $S$ be $c_{\ell-1}$. Moreover, we
  determine $f_{\ell-1}$ as the finishing time of $c_{\ell-1}$ and
  $\theta_{\ell-1}$ as the starting time of $c_{\ell-1}$ in the
  schedule $S$. By repeating the above procedure, we can define
  $\theta_1, f_1, c_1$, where there is no predecessor of $c_1$ any
  more in 
  $S$. For notational brevity, let $f_0$ be
  $t_0$.

  According to the above construction, the sequence $c_1, c_2, \ldots,
  c_{\ell}$ is a \emph{path} in the DAG structure of $\tau_i$. Let
  $exe(c_j)$ be the execution time of $c_j$. By definition, we know
  that $\sum_{j=1}^{\ell} exe(c_j) \leq L_i$.  In the schedule 
  $S$,
  whenever $c_j$ finishes, we know that $c_{j+1}$ can be executed, but
  there may be a gap between $f_j$ and $\theta_{j+1}$. 

  Suppose that $\beta_i(x, y, S)$ is the accumulative amount of
  service provided by the $m_i$ sequential jobs in an interval $[x,
  y)$ in 
  $S$. 
  Since the list scheduling
  algorithm is workload-conserving, if $c_j$ is not executed at time
  $t$ where $\theta_j \leq t \leq f_j$, then all the services are used
  for processing other subjobs of the DAG job of task
  $\tau_i$. Therefore, for $j=1,2,\ldots,\ell$, the maximum amount of
  service that is \emph{provided to the DAG job but not used} in time
  interval $[\theta_j, f_j)$ in 
  $S$ is at most $(m_i-1)
  exe(c_j)$, since each of the $m_i$ reservation servers can only
  provide its service sequentially. That is, in the interval
  $[\theta_j, f_j)$ at least $\max\{\beta_i(\theta_j, f_j, S) -
  exe(c_j) \times (m_i-1), exe(c_j)\}$ amount of execution time of the
  DAG job is executed.

  Similarly, for $j=1,2,\ldots,\ell$, the maximum amount of service
  that is \emph{provided to the DAG job but not used} in time interval
  $[f_{j-1}, \theta_j)$ in 
  $S$ is $0$; otherwise $c_j$
  should have been started before $\theta_j$. Therefore, in the interval
  $[f_{j-1}, \theta_j)$ at least $\beta_i(f_{j-1}, \theta_j, S)$
  amount of execution time of the DAG job is executed.

  Under the assumption that the job misses its deadline at time
  $t_0+D_i$ and the $m_i$ sequential jobs of the reservation servers
  can finish no later than their absolute deadline at $t_0+D_i$ in the
  schedule $S$, we know that 

{\footnotesize \begin{align*}
    & C_i\\
     > &  \sum_{j=1}^{\ell} \beta_i(f_{j-1}, \theta_j, S) + \max\{\beta_i(\theta_j, f_j, S) - exe(c_j)  (m_i-1), exe(c_j)\}\\
      \geq & \sum_{j=1}^{\ell} \beta_i(f_{j-1}, \theta_j, S) + \beta_i(\theta_j, f_j, S) - exe(c_j)  (m_i-1)\\
      = & \sum_{j=1}^{m_i} E_{i,j} - \sum_{j=1}^{\ell}(m_i-1)\times exe(c_j)\\
      = & \sum_{j=1}^{m_i} E_{i,j} - (m_i-1)\times L_i \geq C_i
  \end{align*}}
  Therefore, we reach the contradiction. 
\end{proof}

\subsection{Reservation Constraints}

According to Theorem~\ref{thm:response-time-bound-one-job}, we should
focus on providing the reservations such that $C_i + L_i(m_i-1) \leq
\sum_{j=1}^{m_i} E_{i,j}$. The following lemma shows that any
reservation with $E_{i,j} < L_i$ has no benefit for meeting such a condition.

\begin{lemma}
  \label{lemma:Eij>Li}
  If there exists a $\tau_{i,j^*}$ with $E_{i,j^*} < L_i$, such a
  reservation $\tau_{i,j}$ has a negative impact on the condition
  $\sum_{j=1}^{m_i} E_{i,j} - (C_i + L_i(m_i-1) )$.
\end{lemma}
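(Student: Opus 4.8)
The plan is to read $\Phi(m_i) \defeq \sum_{j=1}^{m_i} E_{i,j} - \bigl(C_i + L_i(m_i-1)\bigr)$ as the slack in the Reservation Condition of Theorem~\ref{thm:response-time-bound-one-job}: a family of reservations serves $\tau_i$ exactly when $\Phi(m_i)\ge 0$, and larger $\Phi$ is strictly more desirable. To show that a reservation $\tau_{i,j^*}$ with $E_{i,j^*}<L_i$ hurts this quantity, I would compare the given configuration against the one obtained by deleting $\tau_{i,j^*}$ and leaving the other $m_i-1$ reservations (and $C_i$, $L_i$, $D_i$, $T_i$) untouched.

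First I would compute the effect of this deletion on $\Phi$. Removing $\tau_{i,j^*}$ decreases $\sum_j E_{i,j}$ by exactly $E_{i,j^*}$ and turns the penalty term $L_i(m_i-1)$ into $L_i(m_i-2)$, i.e.\ decreases it by $L_i$, while $C_i$ is unchanged. Hence
\[
\Phi(m_i-1) \;=\; \Phi(m_i) \;-\; E_{i,j^*} \;+\; L_i \;=\; \Phi(m_i) \;+\; \bigl(L_i - E_{i,j^*}\bigr).
\]
Since $E_{i,j^*}<L_i$ by hypothesis, $L_i-E_{i,j^*}>0$, so $\Phi(m_i-1)>\Phi(m_i)$: dropping the under-sized reservation strictly increases the slack. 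Equivalently, keeping $\tau_{i,j^*}$ strictly lowers $\sum_{j=1}^{m_i} E_{i,j} - (C_i + L_i(m_i-1))$, which is exactly the asserted negative impact, so the lemma follows.

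For completeness I would add one sentence observing that deleting a reservation server can only help the Schedulability Condition of Theorem~\ref{thm:response-time-bound-one-job}, since it shrinks the set of sequential jobs that must be dispatched by $t_0+D_i$ on the $M$ processors; thus there is never any reason to include a reservation with budget below $L_i$ (alternatively, one may fold $E_{i,j^*}$ into another reservation, which keeps $\sum_j E_{i,j}$ fixed while still dropping $m_i$ by one, improving $\Phi$ by the full $L_i$). I do not expect a genuine obstacle here: the core is the one-line identity above. The only care needed is to pin down the informal phrase ``negative impact on the condition'' as the precise statement that $\Phi$ strictly decreases, and to note that the degenerate case $m_i=1$ is consistent with the same algebra (a lone reservation with $E_{i,1}<L_i\le C_i$ already fails the condition).
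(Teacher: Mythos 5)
Your proof is correct and is essentially identical to the paper's: both compare the slack $\sum_{j=1}^{m_i} E_{i,j} - (C_i + L_i(m_i-1))$ before and after deleting $\tau_{i,j^*}$ and observe that the deletion improves it by exactly $L_i - E_{i,j^*} > 0$. Your write-up just makes the one-line algebra and the interpretation of ``negative impact'' more explicit than the paper does.
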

\begin{proof}
  This comes from simple arithmetic. If so, removing the reservation
  $\tau_{i,j^*}$ leads to $m_i-1$ reservation servers with better
  reservations due to $\sum_{j=1}^{m_i} E_{i,j} - (C_i + L_i(m_i-1)) <
  (\sum_{j=1}^{m_i} E_{i,j}) - E_{i,j^*} - (C_i + L_i(m_i-2))$.
\end{proof}

Therefore, we will implicitly consider the property in
Lemma~\ref{lemma:Eij>Li}, i.e., $E_{i,j} \geq L_i, \forall j$ whenever
the reservation condition in
Theorem~\ref{thm:response-time-bound-one-job} is used.  For further
analysis let $E_{i,j} \defeq \gamma_{i,j} \cdot L_i$, with $1 < \gamma_{i,j}
\leq \frac{D_i}{L_i}$ and therefore any reservation system $\mathcal{S} \defeq
(m_i,\gamma_{i,1},\gamma_{i,2},..,\gamma_{i,j})$, that suffices the following
constraints
\begin{subequations}
\label{eq:lp-condition-overall}
\begin{align}
L_i \cdot (m_i-1) + C_i \leq \sum_{j=1}^{m_i}\gamma_{i,j} \cdot L_i \label{eq:over-all-reservation}\\
\gamma_{i,j} \cdot L_i \leq D_i \ \forall 1 \leq j \leq m_i \\
\gamma_{i,j} > 1 \ \forall 1 \leq j \leq m_i
\end{align} 
\end{subequations} 
is feasible for satisfying the reservation condition in
Theorem~\ref{thm:response-time-bound-one-job}. 

The cumulative reservation budget to serve a DAG task is given by
\begin{equation}
\label{eq:cumulative-budget}
C_i' = \sum_{j=1}^{m_i} E_{i,j} = L_i \cdot \sum_{j=1}^{m_i} \gamma_{i,j}.
\end{equation}

In the special case of \emph{equal-reservations}, a lower bound 
of the required amount of reservations can be solved analytically to 
\begin{equation}
\label{eq:budget-constraints}
L_i \cdot (m_i-1)+C_i\leq \gamma_i \cdot m_i \cdot L_i,
\end{equation} which yields
\begin{equation}
\label{eq:mi-lowerbound}
\frac{C_i-L_i}{L_i \cdot (\gamma_i-1)} \leq m_i.
\end{equation}

Note that the notation of $\gamma_{i,j}$ changed to $\gamma_i$, due to 
equal size for all $1 \leq j \leq m_i$. Since the amount of reservations 
must be a natural number we know that
\begin{equation}
\label{eq:mi-def}
\ceiling{\frac{C_i-L_i}{L_i \cdot (\gamma_i-1)}} \defeq m_i
\end{equation} and that $m_i$ is the smallest amount of reservations required if 
all reservation-budgets are equal in size.
Additionally, due to the fact that there are instances in which multiple
settings of $\gamma_i$ yield the same minimal amount of reservations, we define 
\begin{equation}
\label{eq:gamma-min}
\gamma_i = \min\lbrace \gamma_i \ | \ \gamma_i \ \text{satisfies Eq.~\eqref{eq:mi-def}} \rbrace.
\end{equation}

\begin{observation}
The left-hand side of the above
equation~\eqref{eq:mi-def} is minimised, if $\gamma_i$ is maximised, i.e., 
\begin{equation}
m_i = \ceiling{\frac{C_i-L_i}{D_i-L_i}} \nonumber
\end{equation} and the corresponding smallest $\gamma_i$, that achieves an 
equally minimal amount of reservations is given by $1+\frac{C_i-L_i}{m_i L_i}$.
\label{observ1}
\end{observation}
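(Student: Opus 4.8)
The plan is to read $m_i$ in~\eqref{eq:mi-def} as (the ceiling of) a real‑valued function of $\gamma_i$ and argue by monotonicity. First I would set $f(\gamma_i) \defeq \frac{C_i - L_i}{L_i(\gamma_i - 1)}$ on the admissible range $\gamma_i \in (1, D_i/L_i]$ fixed by the constraints in~\eqref{eq:lp-condition-overall}. Since $C_i \ge L_i$, the numerator is a non‑negative constant while the denominator $L_i(\gamma_i-1)$ is strictly positive and strictly increasing in $\gamma_i$; hence $f$ is non‑increasing, and so is $\gamma_i \mapsto \ceiling{f(\gamma_i)}$. Therefore the left‑hand side of~\eqref{eq:mi-def} is minimised at the largest admissible $\gamma_i$, namely $\gamma_i = D_i/L_i$, and substituting this value gives $m_i = \ceiling{\frac{C_i - L_i}{L_i(D_i/L_i - 1)}} = \ceiling{\frac{C_i-L_i}{D_i-L_i}}$, which is the first asserted identity.

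For the second assertion I would determine the smallest $\gamma_i$ that still produces this minimal $m_i$, i.e.\ the minimum of $\{\gamma_i \mid \ceiling{f(\gamma_i)} = m_i\}$. Because $f$ is continuous and non‑increasing, this set is an interval and its left endpoint is the smallest $\gamma_i$ satisfying $f(\gamma_i) \le m_i$. Solving $\frac{C_i-L_i}{L_i(\gamma_i-1)} \le m_i$ for $\gamma_i$ yields $\gamma_i \ge 1 + \frac{C_i-L_i}{m_i L_i}$, so the candidate is $\gamma_i^\star \defeq 1 + \frac{C_i-L_i}{m_i L_i}$. To close the argument I would (i) substitute $\gamma_i^\star$ back to see that $f(\gamma_i^\star) = m_i$ exactly, so $\ceiling{f(\gamma_i^\star)} = m_i$ and the infimum is attained; and (ii) check admissibility: $\gamma_i^\star > 1$ is immediate, and $\gamma_i^\star \le D_i/L_i$ is equivalent after rearrangement to $m_i \ge \frac{C_i-L_i}{D_i-L_i}$, which holds because $m_i = \ceiling{\frac{C_i-L_i}{D_i-L_i}}$. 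This is exactly the value selected by the definition in~\eqref{eq:gamma-min}.

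As with Lemma~\ref{lemma:Eij>Li}, the whole thing is "simple arithmetic"; the one spot warranting a second look is the admissibility check $\gamma_i^\star \le D_i/L_i$ --- i.e.\ confirming that rounding $\frac{C_i-L_i}{D_i-L_i}$ up to an integer never forces the per‑server inflation above the deadline‑imposed bound $D_i/L_i$ --- together with the degenerate reading $C_i = L_i$, where $m_i = 0$ and no reservation is needed. Neither is a genuine obstacle, so I expect the proof to fit in a single short paragraph.
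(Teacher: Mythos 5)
Your proof is correct, and it formalizes exactly the reasoning the paper leaves implicit: the Observation is stated without any proof in the paper, and the intended justification is precisely your monotonicity argument for $\gamma_i \mapsto \lceil f(\gamma_i)\rceil$ on $(1, D_i/L_i]$ followed by solving $f(\gamma_i)=m_i$ for the smallest admissible $\gamma_i$. Your two side checks (that $\gamma_i^\star \le D_i/L_i$ follows from $m_i$ being a ceiling, and the degenerate case $C_i=L_i$) are worthwhile additions rather than deviations.
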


This observation motivates the idea behind the \emph{transformation 
algorithm} \defn{R-MIN}, whose properties are described in the following theorem.


\begin{theorem}
The \emph{R-MIN} algorithm (c.f. Alg.~\ref{alg:r-min}) transforms a set of sporadic, arbitrary-deadline 
DAG tasks into a set of sporadic, arbitrary-deadline sequential tasks, that provide sufficient 
resources to schedule their associated DAG tasks.  
\myendproof
\end{theorem}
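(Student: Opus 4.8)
The statement to prove is that the R-MIN algorithm (not shown in the excerpt but clearly driven by Observation~\ref{observ1}) produces, for each DAG task $\tau_i$, a collection of $m_i$ sequential reservation tasks $\tau_{i,j} = (E_{i,j}, D_i, T_i)$ that together provide sufficient service to meet $\tau_i$'s deadline. The plan is to reduce this to showing that the reservations constructed by R-MIN satisfy the two hypotheses of Theorem~\ref{thm:response-time-bound-one-job}, since that theorem already certifies that a single DAG job completes by $t_0 + D_i$ whenever those hypotheses hold, and the per-job argument extends to all jobs in the sporadic sequence because the reservations released at $t_0$ are dedicated to the job released at $t_0$ (stated in the setup) and because all reservation jobs inherit the same $(D_i, T_i)$, hence form a legitimate sporadic arbitrary-deadline task set.

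First I would recall exactly what R-MIN outputs: following Observation~\ref{observ1}, it sets $m_i = \ceiling{(C_i - L_i)/(D_i - L_i)}$ and the common inflation factor $\gamma_i = 1 + (C_i - L_i)/(m_i L_i)$, so that $E_{i,j} = \gamma_i L_i$ for all $j$. Then I would verify the \textbf{Reservation Condition} of Theorem~\ref{thm:response-time-bound-one-job} by direct substitution: $\sum_{j=1}^{m_i} E_{i,j} = m_i \gamma_i L_i = m_i L_i + (C_i - L_i) = C_i + L_i(m_i - 1)$, so the condition $C_i + L_i(m_i-1) \le \sum_j E_{i,j}$ holds with equality. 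Next I would check the feasibility constraints of~\eqref{eq:lp-condition-overall}: that $\gamma_i > 1$ (immediate, since $C_i > L_i$ whenever $m_i \ge 1$; the degenerate case $C_i = L_i$ gives $m_i = 0$ and the DAG task is already sequential-schedulable on one reservation, which I'd note separately), and that $\gamma_i L_i \le D_i$, i.e. $E_{i,j} \le D_i$, which follows because $m_i \ge (C_i - L_i)/(D_i - L_i)$ implies $(C_i - L_i)/(m_i L_i) \le (D_i - L_i)/L_i$, hence $\gamma_i L_i \le D_i$. This is the arithmetic heart of the argument and it is genuinely routine given the earlier development.

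The one remaining hypothesis is the \textbf{Schedulability Condition}: the $m_i$ reservation jobs must actually be guaranteed to finish by their absolute deadlines. Here the honest statement is that R-MIN only performs the \emph{transformation}; whether the resulting sequential task set is schedulable depends on the downstream scheduler and its analysis (this is precisely the two-part decomposition the paper set up just before the theorem). So the proof should phrase the conclusion conditionally: the transformed sequential tasks $\{\tau_{i,j}\}$ provide sufficient resources \emph{in the sense that}, if they are schedulable by the chosen arbitrary-deadline scheduler, then every DAG job of $\tau_i$ meets its deadline — which is exactly what Theorem~\ref{thm:response-time-bound-one-job} delivers once the Reservation Condition is met. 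I would make this explicit rather than gloss over it.

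**The main obstacle.** The real subtlety is not the algebra but pinning down the precise meaning of "provide sufficient resources" so the claim is both true and non-vacuous. I expect the cleanest route is: (1) cite Theorem~\ref{thm:response-time-bound-one-job} for the per-job guarantee; (2) observe that R-MIN's choice meets the Reservation Condition with equality and respects~\eqref{eq:lp-condition-overall}; (3) argue the sporadic-sequence extension by noting each job's reservations are self-contained and the $\tau_{i,j}$ collectively constitute a well-formed sporadic arbitrary-deadline instance with parameters $(E_{i,j}, D_i, T_i)$, so no inter-job coupling arises beyond what the standard sporadic model already handles. The residual gap — establishing that $\{\tau_{i,j}\}$ is in fact schedulable on $M$ processors — is deliberately left to the later sections on reservation design and the speedup-factor bound of $2 + \sqrt{3}$, and I would say so in one sentence rather than attempt it here.
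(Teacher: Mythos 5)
Your proposal is correct and matches the paper's (essentially omitted) argument: the paper states this theorem as an immediate consequence of Observation~\ref{observ1} and Theorem~\ref{thm:response-time-bound-one-job}, which is precisely the reduction you carry out — verifying that R-MIN's choice of $m_i$ and $E_{i,j}$ meets the Reservation Condition with equality and respects $E_{i,j}\leq D_i$, while the Schedulability Condition is deliberately deferred to the downstream scheduler. Your explicit handling of the conditional meaning of ``sufficient resources'' and of the degenerate $C_i = L_i$ case is in fact more careful than the paper's presentation.
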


Intuitively, R-MIN classifies tasks into light and heavy tasks. For each heavy
task,  it assigns the minimum number of reservation servers to the task and
calculates  the minimum equal-reservations for  servers based on Observation~\ref{observ1}.

\begin{figure}[t]
\includegraphics[width=0.5\textwidth]{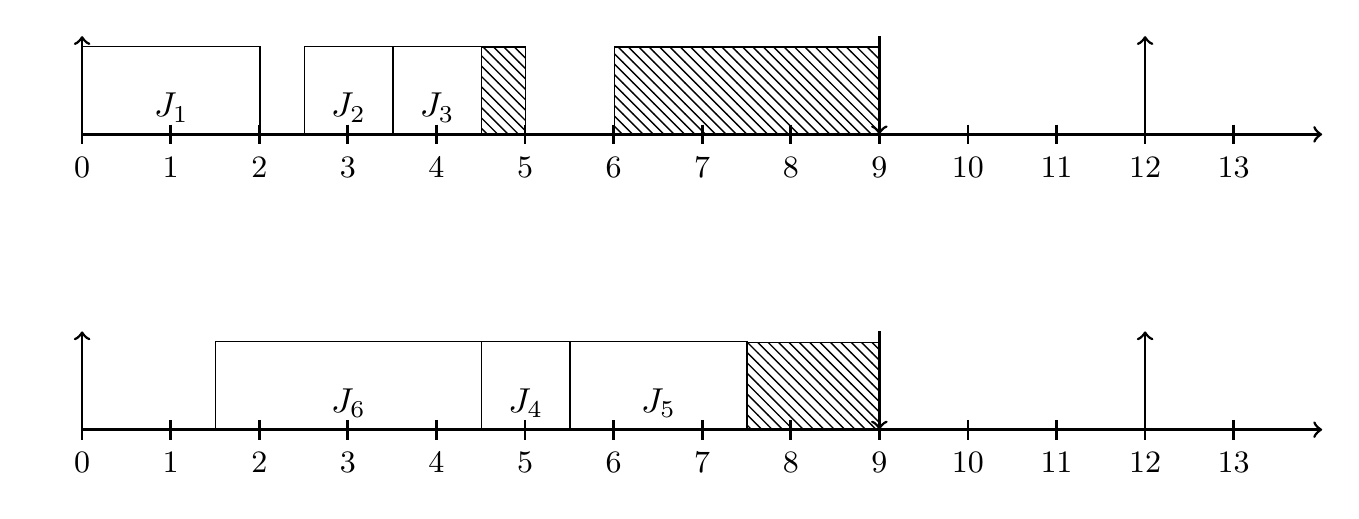}
\caption{An arbitrary schedule of two equal reservations, as computed by the
\textsc{r-min} algorithm.
The DAG task shown in Fig.~\ref{fig:example} is scheduled according to the \emph{list-scheduling} 
algorithm by any reservation server that does not service an unfinished job at that time.
$7.5$ units of time are provided over the interval
$[0, 9)$ by each reservation, scheduled on two processors. 
The hatched areas denote a \emph{spinning} reservation whereas the white areas imply 
that the reservation is either preempted or inactive.} 
\label{fig:example-schedule}
\end{figure}


\begin{example}
To illustrate the proposed concept, an arbitrary schedule of two identical 
reservations is shown in Figure~\ref{fig:example-schedule},
servicing the DAG task in Figure~\ref{fig:example}.
The schedule provides the minimal amount of identical 
reservations and associated budgets that are required to service the given DAG task 
under any preemption pattern as determined by the \textsc{r-min} algorithm.
Over the interval $[0,9)$, 7.5 units of time are provided by each reservation 
to service the DAG task using \emph{list-scheduling}.
The hatched areas denote that the reservation spins due to the lack 
of pending jobs whereas the white gaps denote that the reservation is 
either preempted or inactive.
The amount of time the reservations are \emph{spinning} may seem overly 
pessimistic, but note that this depends on the dependencies on the preemption 
patterns and the structure of the DAG task itself. 
Thus this approach trades resources for robustness with respect to 
preemption and structure uncertainty.
\end{example}

\begin{algorithm}[t]
\caption{R-MIN Algorithm}
\label{alg:r-min}
\begin{algorithmic}[1]
\STATE ${\tau}_{\textsc{heavy}} \leftarrow \{\tau_i \in {\tau} \ | \ C_i > \min(T_i, D_i) \}$
\STATE ${\tau}_{\textsc{light}} \leftarrow \{\tau_i \in {\tau} \ | \   C_i \leq \min(T_i, D_i)\}$
\STATE ${\tau} \leftarrow {\tau}_{\textsc{light}}$
\FOR{each task $\tau_i \in {\tau}_{\textsc{heavy}}$}
\STATE $m_i \leftarrow \ceiling{\frac{C_i-L_i}{D_i-L_i}}$
\FOR{$1 \leq j \leq m_i$}
\STATE $E_{i,j} \leftarrow \left(1+\frac{C_i-L_i}{m_i L_i}\right) \cdot L_i$
\STATE $\tau_{i,j} \leftarrow  (E_{i,j}, D_i, T_i)$
\STATE ${\tau} \leftarrow {\tau} \cup \setof{\tau_{i,j}}$
\ENDFOR
\ENDFOR
\RETURN ${\tau}$
\end{algorithmic}
\end{algorithm}

Note that there are more feasible configurations to serve a DAG task as long as the conditions in 
Eq.~\eqref{eq:lp-condition-overall} are met. Non-equal reservation budgets, e.g.,
at least one reservation budget in $E_{i,1}, E_{i,2},..,E_{i,m_i}$ is different
from the others, can potentially improve schedulability in partitioned or
semi-partitioned scheduling.
This is due to the fact that variability in reservation budgets 
can be helpful in \emph{packing} them onto the available processor clusters whilst satisfying 
capacity metrics.

In order to retrieve those \emph{non-equal} reservation budgets, two different approaches can be 
identified:

\begin{enumerate}
    \item Free distribution of the individual reservation budgets for a fixed cumulative reservation budget.
    \item Fixed reservation budget distribution, whilst increasing the amount of reservations and thus decreasing the individual budgets.
\end{enumerate}

The first approach is illustrate in the following example.
\begin{example}
\label{ex:implicit-dag}
Let $\tau_i$ be an implicit-deadline, sporadic DAG task with worst-case execution-time $C_i=8$, critical-path length $L_i=5$, 
period $T_i=7$ and relative deadline $D_i=7$.
In order to minimize the cumulative reservation budget as given by Eq.~\eqref{eq:mi-lowerbound}, it is 
mandatory to minimize the number of reservation servers $m_i$. 
The smallest $m_i$ that satisfies Eq.~\eqref{eq:mi-lowerbound} is given by
\begin{align}
  \label{eq:example-cumulative-bound}
  \ceiling{\frac{C_i-L_i}{T_i-L_i}} = \ceiling{\frac{8-5}{7-5}}=2
\end{align} and implies that the largest possible budget, i.e., the tasks
relative deadline, is selected.
Therefore the smallest cumulative service, that the two reservation servers need to provide is given by 
$8+5\cdot(2-1) = 13$.
Using the budget constraints, $5 < E_{i,1} \leq 7$ and $5 < E_{i,2} \leq 7$, any
combination of the reservation budgets from $(E_{i,1} = 6, E_{i,2}=7)$
up to $(E_{i,1} = 6.5, E_{i,2} = 6.5)$ suffices the necessary conditions, whilst using the same amount of 
reservation servers.
\end{example}

The benefit of the combination $E_{i,1} = 6$ and $E_{i,2} = 7$ is that one of
them has a smaller execution time at a price that one of them has a higher
execution time. It may be possible that such a combination is easier
to be schedulable, but there is no greedy and simple criteria to find
the most suitable combination in the \emph{global perspective for all the tasks}.

The second approach is illustrated in the following example.
\begin{example}
\label{ex:increase-mi}
Let the task be the same as in Example~\ref{ex:implicit-dag} and let $m_i \geq
2$, then the reservation budgets are set to
\begin{align}
E_i(m_i) = L_i + \frac{C_i-L_i}{m_i} = 5 + \frac{3}{m_i}
\end{align} for all $1 \leq i \leq m_i$.
\end{example}

The benefit of this approach is that, if $E_{i}(m_i)$ is too large to fit on any
processor, $m_i + 1$ reservations with decreased budgets could be scheduled
easier. 

\section{Scheduling Reservation Servers}
\label{sec:ordinary-scheduling}

\subsection{Partitioned Scheduling}

When considering arbitrary-deadline task systems, the exact schedulability test
evaluates the worst-case response time 
using time-demand analysis and a \emph{busy-window}
concept~\cite{DBLP:conf/rtss/Lehoczky90}.
The finishing time  $R_{k,h}$ of the $h$-th job of task $\tau_k$ can be
calculated by finding the minimum $t$ in the busy window where 
\begin{equation}
h E_k + \sum_{\tau_i \in {\tau}_m} \ceiling{\frac{t}{T_i}}E_i \leq t.
\end{equation}
This means, the response time of the $h$-th job is  
\mbox{$R_{k,h}-(h-1)T_k$}. 
If $R_{k,h} \leq h T_k$, the busy window of task $\tau_k$
finishes with the $h$-th job. Therefore, the worst-case
response time of $\tau_k$ is the maximum
response time among the jobs in the busy
window~\cite{DBLP:conf/rtss/Lehoczky90}.  
While this provides an exact schedulability test, 
the test has an exponential time complexity since 
the length of the busy window can be up to the task sets hyper-period which is 
exponential with respect to the input size.


Fisher, Baruah, and Baker~\cite{DBLP:conf/ecrts/FisherBB06} provided the
following approximated test: 
\begin{subequations}
\begin{align}
   E_k + \sum_{\tau_i \in {\bf
    T}_m} \left(1+\frac{D_k}{T_i}\right)E_i \leq D_k & \;\;\;\mbox{ and }   \label{eq:fbb-arbitrary-1}\\
  U_k + \sum_{\tau_i \in {\bf
    T}_m} U_i \leq 1\label{eq:fbb-arbitrary-2}
\end{align}  
\end{subequations}
Eq.~\eqref{eq:fbb-arbitrary-2} ensures that the workload after $D_k$ is not
underestimated  when arbitrary deadline task
systems are considered, which could happen in  Eq.~\eqref{eq:fbb-arbitrary-1}.


Bini et al.~\cite{DBLP:journals/tc/BiniNRB09} improved the analysis
in~\cite{DBLP:conf/ecrts/FisherBB06} by providing a tighter analysis than
Eq.~\eqref{eq:fbb-arbitrary-1}, showing that  
the worst-case response time of task $\tau_k$ is at most 
\[
\frac{E_k+ \sum_{\tau_i \in {\tau}_m} E_i - \sum_{\tau_i \in {\bf
      T}_m } U_i E_i}{1-\sum_{\tau_i \in {\tau}_m} U_i}.
\]
Therefore, the schedulability condition in
Eqs. \eqref{eq:fbb-arbitrary-1} and \eqref{eq:fbb-arbitrary-2}
can be rewritten as
\begin{subequations}
\begin{align}
  E_k + D_k(\sum_{\tau_i \in {\tau}_m} U_i)+ \sum_{\tau_i \in {\bf 
      T}_m} E_i - \sum_{\tau_i \in {\tau}_m } U_i E_i&\leq 
  D_k  \label{eq:fbb-arbitrary-3}\\
 U_k + \sum_{\tau_i \in {\bf 
    T}_m} U_i &\leq 1\label{eq:fbb-arbitrary-4}
\end{align}    
\end{subequations}

\subsection{Competitiveness}
\label{sec:competitive}

This section will analyze the
theoretical properties when scheduling the reservation servers based on the
deadline-monotonic (DM) partitioning strategy. It has been proved by Chen~\cite{ChenECRTS2016-Partition}
that such a strategy has a speedup factor of $2.84306$ \mbox{(respectively,
$3$)}  against the optimal schedule for ordinary constrained-deadline (respectively, arbitrary-deadline) task systems when the
fixed-priority deadline-monotonic scheduling algorithm is
used. Moreover, Chen et
al.~\cite{Chakraborty2011a,DBLP:journals/rts/ChenC13} also showed that
such a strategy has a speedup factor of $2.6322$ \mbox{(respectively, $3$)}
against the optimal schedule for ordinary constrained-deadline
(respectively, arbitrary-deadline) task systems when the
dynamic-priority earliest-deadline-first (EDF) scheduling algorithm is
used.


\begin{theorem}
  \label{thm:gamma-inflation}
  Suppose that $\gamma>1$ is given, \mbox{$C_i > L_i$},
  and there are
  exactly $m_i$ reservation servers for task $\tau_i$ where \mbox{$m_i
  \defeq \ceiling{\frac{C_i-L_i}{L_i(\gamma-1)}}$} with $m_i \geq 2$. If
  \mbox{$C_i' = \sum_{j=1}^{m_i} E_{i,j} = C_i + (m_i-1) \cdot L_i$}, then $C_i' \leq
  (1+\frac{1}{\gamma-1}) \cdot C_i$.
\end{theorem}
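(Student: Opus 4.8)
The plan is to unfold the definition of $C_i'$ and reduce the claimed inequality to a single elementary fact about the ceiling function. Substituting $C_i' = C_i + (m_i-1)L_i$ into the target $C_i' \le \left(1+\frac{1}{\gamma-1}\right)C_i = C_i + \frac{C_i}{\gamma-1}$ and cancelling $C_i$, the statement becomes equivalent to $(m_i-1)L_i \le \frac{C_i}{\gamma-1}$. Since $L_i > 0$ and $\gamma - 1 > 0$ by hypothesis, dividing by $L_i(\gamma-1)$ preserves the direction, so the whole theorem reduces to the bound
\[
m_i - 1 \le \frac{C_i}{L_i(\gamma-1)}.
\]

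Next I would invoke the standard inequality $\ceiling{x} < x + 1$, valid for every real $x$, applied with $x = \frac{C_i-L_i}{L_i(\gamma-1)}$, which is exactly the argument of the ceiling in $m_i \defeq \ceiling{\frac{C_i-L_i}{L_i(\gamma-1)}}$. This gives $m_i - 1 < \frac{C_i-L_i}{L_i(\gamma-1)}$. Because $L_i > 0$ forces $C_i - L_i < C_i$ and the denominator $L_i(\gamma-1)$ is positive, the right-hand side is strictly less than $\frac{C_i}{L_i(\gamma-1)}$. Chaining these yields $(m_i-1)L_i < \frac{C_i}{\gamma-1}$, hence $C_i' = C_i + (m_i-1)L_i < \left(1+\frac{1}{\gamma-1}\right)C_i$; in fact the inequality comes out strict.

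There is no real obstacle here: the argument is a two-line ceiling manipulation. The only points requiring care are that each division/multiplication step preserves the inequality direction, which is guaranteed by $\gamma > 1$ and $L_i > 0$, and that the ceiling bound is applied to the correct argument. Note that the hypotheses $m_i \ge 2$ and $C_i > L_i$ are not actually needed for the inequality itself — they ensure the construction is non-degenerate and are used elsewhere — so I would not rely on them in the proof.
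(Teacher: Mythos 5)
Your proof is correct, and it is genuinely more direct than the paper's. Both arguments start from the same consequence of the ceiling definition, namely $m_i - 1 < \frac{C_i - L_i}{L_i(\gamma-1)}$, but you then simply multiply by $L_i(\gamma-1) > 0$ and drop the $-L_i$ to obtain $(m_i-1)L_i(\gamma-1) < C_i - L_i < C_i$, which after dividing by $\gamma-1$ and adding $C_i$ is exactly the claim (with strict inequality, in fact). The paper instead builds the two-sided sandwich $C_i + (m_i-1)L_i \le m_i\gamma L_i < C_i + (m_i-2)L_i + \gamma L_i$, eliminates $L_i$ via $L_i < \frac{C_i}{(m_i-1)(\gamma-1)}$, and invokes $m_i \ge 2$ to bound $\frac{m_i-2}{m_i-1} \le 1$; that chain genuinely needs the hypothesis $m_i \ge 2$ (already for the division by $m_i-1$), whereas your route does not --- you are right that $m_i \ge 2$ and $C_i > L_i$ only rule out degenerate configurations here and are needed elsewhere, e.g.\ for the classification in Lemma~\ref{lemma-correct-classification}. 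The only thing the paper's longer detour buys is the intermediate bound $C_i' \le m_i\gamma L_i$, i.e.\ $E_{i,j} = C_i'/m_i \le \gamma L_i$ per server, but the paper re-derives that separately as Lemma~\ref{lemma:relation-individual} anyway, so nothing is lost by your shortcut. One cosmetic slip: to pass from $(m_i-1)L_i \le \frac{C_i}{\gamma-1}$ to $m_i-1 \le \frac{C_i}{L_i(\gamma-1)}$ you divide by $L_i$, not by $L_i(\gamma-1)$; the stated equivalence is nonetheless correct since both factors are positive.
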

\begin{proof}
  By the assumption $L_i > 0$ and $\gamma > 1$, the setting of
  $m_i=\ceiling{\frac{C_i-L_i}{L_i(\gamma-1)}}$ implies that
  {\footnotesize{\begin{align}
    &m_i-1 < \frac{C_i-L_i}{L_i(\gamma-1)} \leq m_i \label{eq:m_i-floor}\\
    \Rightarrow &(m_i-1)(\gamma-1)L_i < C_i-L_i \leq m_i
    L_i\gamma  - m_iL_i \label{eq:m_i-floor-expand}\\
    \Rightarrow     &  C_i + (m_i-1)L_i \leq m_i \gamma L_i < C_i +
    (m_i -2) L_i +\gamma L_i \label{eq:m_i-floor-final}
  \end{align}}}
  The condition $m_i \gamma L_i < C_i +
    (m_i -2) L_i +\gamma L_i$ in Eq.~\eqref{eq:m_i-floor-final} implies $(m_i -1) L_i < \frac{C_i +
    (m_i -2) L_i}{\gamma}$ since $\gamma > 0$.
  Since \mbox{$C_i' = \sum_{j=1}^{m_i} E_{i,j} = C_i + (m_i-1) L_i$} by
  definition, we know 
  {\small  \begin{align*}
      C_i'  < \qquad&C_i + \frac{C_i
        +(m_i-2)L_i}{\gamma}\\
      <_1 \qquad& \frac{C_i(\gamma+1)}{\gamma} + \frac{(m_i-2)}{\gamma}\left(\frac{C_i}{(m_i-1)(\gamma-1)}\right)\\
      \leq_2\qquad & C_i\left(\frac{\gamma+1}{\gamma} +
        \frac{1}{\gamma^2-\gamma}\right)\\
      = \qquad &  \left(1+\frac{1}{\gamma-1}\right) \cdot C_i
    \end{align*}}
    where $<_1$ is due to $L_i <\frac{C_i}{(m_i-1)(\gamma-1)+1} < \frac{C_i}{(m_i-1)(\gamma-1)}$ by
      reorganizing the condition in Eq.~\eqref{eq:m_i-floor} and
      $\leq_2$ is due to $m_i \geq 2$ and $\frac{m_i-2}{m_i-1} \leq 1$.
\end{proof}

\begin{lemma}
\label{lemma:relation-individual}
Under the same setting as in Theorem~\ref{thm:gamma-inflation},
\begin{equation}
  \label{eq:relation-individual}
\frac{C_i'}{m_i}  = \frac{C_i + (m_i-1)L_i}{m_i} \leq \gamma L_i  
\end{equation}
\end{lemma}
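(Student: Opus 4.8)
The plan is to derive the inequality directly from the ceiling definition of $m_i$, exactly as was already done inside the proof of Theorem~\ref{thm:gamma-inflation}. Recall that $m_i \defeq \ceiling{\frac{C_i-L_i}{L_i(\gamma-1)}}$, so by the defining property of the ceiling function we have $m_i \geq \frac{C_i-L_i}{L_i(\gamma-1)}$. Since $L_i > 0$ and $\gamma > 1$ by hypothesis, the quantity $L_i(\gamma-1)$ is strictly positive, so multiplying both sides by it preserves the inequality and yields $m_i L_i(\gamma-1) \geq C_i - L_i$.

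Next I would rearrange this. Expanding the left-hand side gives $m_i \gamma L_i - m_i L_i \geq C_i - L_i$, i.e., $m_i \gamma L_i \geq C_i + m_i L_i - L_i = C_i + (m_i-1)L_i$. This is precisely the left-hand inequality of Eq.~\eqref{eq:m_i-floor-final}, so one can alternatively just cite that line. Because $m_i \geq 2 > 0$, dividing both sides by $m_i$ is legitimate and gives $\gamma L_i \geq \frac{C_i + (m_i-1)L_i}{m_i}$. Finally, substituting the definition $C_i' = \sum_{j=1}^{m_i} E_{i,j} = C_i + (m_i-1)L_i$ (part of the hypothesis carried over from Theorem~\ref{thm:gamma-inflation}) turns the right-hand side into $\frac{C_i'}{m_i}$, which is the claim.

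There is essentially no obstacle here: the statement is a one-line consequence of the ceiling bound $m_i \geq \frac{C_i-L_i}{L_i(\gamma-1)}$, and the only points requiring (trivial) care are that $L_i(\gamma-1) > 0$ so the multiplication is monotone, and that $m_i \geq 2 > 0$ so the final division is valid. In fact the inequality $C_i + (m_i-1)L_i \leq m_i\gamma L_i$ already appears verbatim in Eq.~\eqref{eq:m_i-floor-final}, so the whole lemma amounts to dividing that inequality by $m_i$ and recognizing the numerator as $C_i'$.
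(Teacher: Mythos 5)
Your proof is correct and rests on exactly the same fact as the paper's proof, namely the ceiling bound $m_i \geq \frac{C_i-L_i}{L_i(\gamma-1)}$; the paper just substitutes this lower bound into the denominator of $L_i + \frac{C_i-L_i}{m_i}$ rather than cross-multiplying, but the two manipulations are equivalent. Your observation that the inequality $C_i + (m_i-1)L_i \leq m_i\gamma L_i$ is already Eq.~\eqref{eq:m_i-floor-final} is also accurate.
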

\begin{proof}
  \begin{align*}
    \frac{C_i + (m_i-1)L_i}{m_i} = & L_i + \frac{C_i - L_i}{m_i} = L_i + \frac{C_i
      - L_i}{\ceiling{\frac{C_i-L_i}{L_i(\gamma-1)}}}\\
    \leq & L_i + \frac{C_i
      - L_i}{\frac{C_i-L_i}{L_i(\gamma-1)}} = \gamma L_i
  \end{align*}
where the inequality is due to $C_i > L_i$ and $\gamma > 1$.
\end{proof}

\begin{algorithm}[t]
\caption{R-EQUAL Algorithm}
\label{alg:ert}
\begin{algorithmic}[1]
\STATE ${\tau}_{\textsc{heavy}} \leftarrow \{\tau_i \in {\tau} \ | \ C_i > \gamma L_i\}$
\STATE ${\tau}_{\textsc{light}} \leftarrow \{\tau_i \in {\tau} \ | \   C_i \leq \gamma L_i\}$
\STATE ${\tau} \leftarrow {\tau}_{\textsc{light}}$
\FOR{each $\tau_i \in {\tau}_{\textsc{heavy}}$}
\STATE $m_i \leftarrow \ceiling{\frac{C_i-L_i}{L_i(\gamma-1)}}$
\FOR{$1 \leq j \leq m_i$}
\STATE $E_{i,j} \leftarrow \frac{C_i + (m_i-1)L_i}{m_i}$
\STATE $\tau_{i,j} \leftarrow  (E_{i,j}, D_i, T_i)$
\STATE ${\tau} \leftarrow {\tau}^{*} \cup \setof{\tau_{i,j}}$
\ENDFOR
\ENDFOR
\RETURN ${\tau}$
\end{algorithmic}
\end{algorithm}

The result in Theorem~\ref{thm:gamma-inflation} can be used to specify
an algorithm that transforms a collection of sporadic, arbitrary
deadline DAG tasks into a transformed collection of light sporadic
reservation tasks with a \emph{constant $\gamma$}, illustrated in
Algorithm~\ref{alg:ert}.  The algorithm simply classifies a task
$\tau_i$ as a heavy task if $C_i > \gamma L_i$ and a light task if
$C_i \leq \gamma L_i$, respectively. If task $\tau_i$ is a heavy task,
$m_i \leftarrow \ceiling{\frac{C_i-L_i}{L_i(\gamma-1)}}$ reservation
servers will be provided, each with an execution time budget of
$\frac{C_i + (m_i-1)L_i}{m_i}$.

We implicitly assume $L_i
\leq D_i$ in Algorithm~\ref{alg:ert}. After the transformation, we can
apply any existing scheduling algorithms for scheduling ordinary
sporadic real-time task systems to partition or schedule the
reservation servers.

\begin{lemma}
  \label{lemma-correct-classification}
  By adopting Algorithm~\ref{alg:ert}, for a given $\gamma > 1$, 
  \begin{itemize}
  \item if a task $\tau_i$ is in ${\tau}_{\textsc{heavy}}$, $m_i \geq
    2$, Theorem~\ref{thm:gamma-inflation} holds, and $E_{i,j} =
    \frac{C_i + (m_i-1)L_i}{m_i}$ for
    $j=1,2,\ldots,m_i$;
  \item if a task $\tau_i$ is in ${\tau}_{\textsc{light}}$, $m_i = 1$,
    and $\tau_i$ is executed sequentially without any inflation of
    execution time, i.e., $E_{i,1} = C_i$.
  \end{itemize}
  Furthermore, $E_{i,j}
  \leq \gamma L_i$ for any $j=1,2,\ldots,m_i$, and \mbox{$C_i' =
  \sum_{j=1}^{m_i} E_{i,j} = C_i + (m_i-1) \cdot L_i \leq
  (1+\frac{1}{\gamma-1}) \cdot C_i$} for both light and heavy tasks.
\end{lemma}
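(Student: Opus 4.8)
The plan is to verify the claim separately for the two classes produced by Algorithm~\ref{alg:ert} and then combine the resulting bounds into the uniform statement at the end.

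First I would treat the heavy case. If $\tau_i \in {\tau}_{\textsc{heavy}}$, then by the classification rule on line~1 of Algorithm~\ref{alg:ert} we have $C_i > \gamma L_i$; since $\gamma > 1$ and $L_i > 0$ this gives both $C_i > L_i$ and $C_i - L_i > (\gamma-1)L_i$, hence $\frac{C_i-L_i}{L_i(\gamma-1)} > 1$ and therefore $m_i = \ceiling{\frac{C_i-L_i}{L_i(\gamma-1)}} \geq 2$. These are precisely the hypotheses of Theorem~\ref{thm:gamma-inflation}, so it applies to $\tau_i$ and yields $C_i' = \sum_{j=1}^{m_i} E_{i,j} = C_i + (m_i-1)L_i \leq (1+\frac{1}{\gamma-1})\cdot C_i$. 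The identity $E_{i,j} = \frac{C_i + (m_i-1)L_i}{m_i}$ for every $j$ is read directly off line~7 of the algorithm, and the per-server bound $E_{i,j} \leq \gamma L_i$ is exactly Lemma~\ref{lemma:relation-individual}.

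Next I would treat the light case. If $\tau_i \in {\tau}_{\textsc{light}}$, then $C_i \leq \gamma L_i$, and by inspection of Algorithm~\ref{alg:ert} no reservation server is created for $\tau_i$ (the \textbf{for}-loop only ranges over ${\tau}_{\textsc{heavy}}$); $\tau_i$ is passed to the output set unchanged, so effectively $m_i = 1$ and $E_{i,1} = C_i$, i.e.\ there is no execution-time inflation. Consequently $C_i' = C_i + (1-1)L_i = C_i$, the per-server bound $E_{i,1} = C_i \leq \gamma L_i$ holds by the classification itself, and $C_i' = C_i \leq (1+\frac{1}{\gamma-1})\cdot C_i$ holds trivially since $\frac{1}{\gamma-1} > 0$.

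Finally I would assemble the closing assertion: in both branches $E_{i,j} \leq \gamma L_i$ for all $j$, and $C_i' = C_i + (m_i-1)L_i \leq (1+\frac{1}{\gamma-1})\cdot C_i$, which is what is claimed. I do not expect a real obstacle here — the argument is essentially bookkeeping over the two branches of the algorithm; the only point that needs a line of care is the implication $C_i > \gamma L_i \Rightarrow m_i \geq 2$, which simultaneously makes Theorem~\ref{thm:gamma-inflation} applicable on the heavy branch and leaves $m_i = 1$ exclusively for the light branch, together with the observation that the light branch performs no inflation so that the common bound on $C_i'$ degenerates to $C_i' = C_i$ correctly.
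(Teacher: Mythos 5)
Your proof is correct and follows essentially the same route as the paper, which simply appeals to Theorem~\ref{thm:gamma-inflation} and Lemma~\ref{lemma:relation-individual}; you merely spell out the bookkeeping (in particular the step $C_i > \gamma L_i \Rightarrow m_i \geq 2$ that makes the theorem applicable on the heavy branch) that the paper leaves implicit.
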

\begin{proof}
  This holds according to the above discussions in
  Theorem~\ref{thm:gamma-inflation} and Lemma~\ref{lemma:relation-individual}.
\end{proof}


\begin{theorem}
  \label{thm:arbitrary-DM}
  A system of arbitrary-deadline DAG tasks scheduled by
  {reservation-based federated scheduling} under partitioned DM
  admits a constant speedup factor of $3+2\sqrt{2}$ with respect to
  any optimal scheduler by setting $\gamma$ to $1+\sqrt{2}$.
\end{theorem}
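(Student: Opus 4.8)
The plan is to reduce Theorem~\ref{thm:arbitrary-DM} to a uniprocessor‑partitioning statement about the reservation servers produced by \emph{R-EQUAL} (Algorithm~\ref{alg:ert}) with the parameter $\gamma=1+\sqrt{2}$, and then to run the deadline‑monotonic partitioning analysis of Chen~\cite{ChenECRTS2016-Partition} — in the sharpened form built on the response‑time bound of Bini et al.~\cite{DBLP:journals/tc/BiniNRB09}, i.e.\ conditions~\eqref{eq:fbb-arbitrary-3}--\eqref{eq:fbb-arbitrary-4} — on the transformed set. Applying \emph{R-EQUAL} turns $\tau$ into an ordinary sporadic arbitrary‑deadline task set $\tau'$ of reservation servers. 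By Theorem~\ref{thm:response-time-bound-one-job} together with Lemma~\ref{lemma-correct-classification}, meeting the deadlines of the servers $\tau_{i,j}=(E_{i,j},D_i,T_i)$ suffices for every DAG job of $\tau_i$ to meet its deadline; each server satisfies $E_{i,j}\le\gamma L_i\le\gamma D_i$; and the per‑task work inflation is $\sum_{j=1}^{m_i}E_{i,j}=C_i'\le\rho\,C_i$ with $\rho:=1+\tfrac{1}{\gamma-1}$ (light and heavy tasks alike). Hence it is enough to show: if $\tau$ is schedulable on $M$ unit‑speed processors by some optimal scheduler, then $\tau'$ is schedulable by deadline‑monotonic partitioning on $M$ processors of speed $b=3+2\sqrt{2}$.

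Next I would harvest the necessary conditions implied by feasibility of $\tau$ and carry them over to $\tau'$. Since the total work executed in any window of length $t$ on $M$ processors is at most $Mt$, while a run of consecutive jobs of $\tau_i$ generates work demand $\mathrm{dbf}_i(t)=\max\bigl\{0,\lfloor\tfrac{t-D_i}{T_i}\rfloor+1\bigr\}\,C_i$, feasibility forces $\sum_{\tau_i\in\tau}\mathrm{dbf}_i(t)\le Mt$ for all $t>0$ — in particular $\sum_i U_i\le M$ and $\sum_{i:\,D_i\le D}C_i\le MD$ for every $D>0$ — as well as $L_i\le D_i$. Because the servers of $\tau_i$ have joint demand at most $\rho\,\mathrm{dbf}_i(t)$, the set $\tau'$ inherits $\sum_{\tau'}\mathrm{dbf}(t)\le\rho Mt$, and every server additionally obeys the budget‑to‑relative‑deadline bound $E_{i,j}/D_i\le\gamma$.

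The heart of the proof is then the saturation argument for first‑fit deadline‑monotonic partitioning. Order the servers by nondecreasing relative deadline and assume, for contradiction, that some server $\sigma=\tau_{k,j}$ (write $D=D_k$, $E=E_{k,j}\le\gamma D$, $U=E/T_k$) cannot be placed on any of the $M$ speed‑$b$ processors; all servers already placed then have relative deadline at most $D$. Hence on every processor the scaled workload inequality from~\eqref{eq:fbb-arbitrary-3} or the scaled utilization inequality from~\eqref{eq:fbb-arbitrary-4} is violated. Summing the violated inequalities over the two groups of processors and bounding the global quantities by $\sum_{\text{placed}}E_\nu\le\rho MD$ (from $\sum_{\tau'}\mathrm{dbf}(D)\le\rho MD$) and $\sum_{\text{placed}}U_\nu\le\rho M$, and using $E/D\le\gamma$ and $U\le\gamma$, yields after dividing by $M$ a bound that forces $b<\gamma+2\rho=\gamma+2+\tfrac{2}{\gamma-1}$. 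For $\gamma=1+\sqrt{2}$ this reads $b<3+2\sqrt{2}$, contradicting $b=3+2\sqrt{2}$, so every server is placed and $\tau'$ — hence $\tau$ — is schedulable. That $\gamma=1+\sqrt{2}$ is the optimal choice then follows by minimizing $g(\gamma)=\gamma+2+\tfrac{2}{\gamma-1}$ over $\gamma>1$: $g'(\gamma)=1-\tfrac{2}{(\gamma-1)^2}$ vanishes at $\gamma=1+\sqrt{2}$, where $g(1+\sqrt{2})=3+2\sqrt{2}$.

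The main obstacle is the third step, specifically obtaining the constant $\gamma+2\rho$ rather than something weaker. A naive speedup‑$3$ composition is both inapplicable — the inflated set $\tau'$ need not be feasible on $M$ unit‑speed processors, only on $M$ processors of speed $\max\{\rho,\gamma\}$ in a demand/fit sense — and lossy ($3\max\{\rho,\gamma\}\ge 6$), so Chen's analysis must be reopened and the two perturbations tracked at once: the $\rho$‑inflated demand‑bound functions and the relaxation of the budget‑to‑deadline ratio from $1$ to $\gamma$. In particular one must exploit the $-\sum U_\nu E_\nu$ correction term of~\eqref{eq:fbb-arbitrary-3} so that the workload‑failing processors contribute a total of $2\rho MD$ rather than $3\rho MD$. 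A secondary technical point is the genuinely arbitrary‑deadline regime $T_k<D_k$, where one must still control the utilization $U$ of the failing server (for which $U\le\gamma$ no longer follows from $E/D\le\gamma$) and confirm that light tasks remain schedulable on a single speed‑$b$ processor; this is where the classification threshold of Algorithm~\ref{alg:ert} and the dominance of reservation‑based over plain federated scheduling come into play.
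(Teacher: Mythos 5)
Your proposal takes essentially the same route as the paper: apply R-EQUAL with $\gamma=1+\sqrt{2}$, reduce correctness to Theorem~\ref{thm:response-time-bound-one-job} and Lemma~\ref{lemma-correct-classification}, and derive the contradiction by summing the violated deadline-based and utilization-based partitioning conditions over the two groups of processors, arriving at exactly the paper's constant $\gamma+2\left(1+\frac{1}{\gamma-1}\right)=3+2\sqrt{2}$ and the same optimization of $\gamma$. The only substantive quibble is your claim that the $-\sum U_i E_i$ correction term of Eq.~\eqref{eq:fbb-arbitrary-3} is needed to obtain the coefficient $2$ rather than $3$: the paper works with the plain conditions \eqref{eq:fbb-arbitrary-1}--\eqref{eq:fbb-arbitrary-2}, where $\left(1+\frac{D_k}{T_i}\right)E_{i,j}$ already splits into $E_{i,j}+D_k\frac{E_{i,j}}{T_i}$ and the two resulting sums are each bounded by $\left(1+\frac{1}{\gamma-1}\right)MX$, so the tighter Bini-style test is not required (though using it does no harm).
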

\begin{proof}
  We first adopt Algorithm~\ref{alg:ert} with a setting of
  \mbox{$\gamma=1+\sqrt{2}$}.  If there exists a DAG task in which
  \mbox{$(1+\sqrt{2})L_i > D_i$}, then we know that the speedup factor for
  this task set is $(1+\sqrt{2})$. We focus on the case that $\gamma
  L_i \leq D_i$.

  Suppose that $\tau_{k,\ell}$ is a reservation task that is not able
  to be partitioned to any of the given $M$ processors, where \mbox{$1 \leq
  \ell \leq m_k$}.  Let ${\bf M}_1$ be the set of processors in which
  Eq.~\eqref{eq:fbb-arbitrary-1} fails. Let ${\bf M}_2$ be the set of
  processors in which Eq.~\eqref{eq:fbb-arbitrary-1} succeeds but
  Eq.~\eqref{eq:fbb-arbitrary-2} fails. Since $\tau_{k,\ell}$ cannot
  be assigned on any of the $M$ processors 
  \mbox{$|{\bf M}_1|+|{\bf
    M}_2|=M$.} By the violation of Eq.~\eqref{eq:fbb-arbitrary-1}, we
  know that
  \begin{align}
    &  |{\bf M}_1|E_{k,\ell} + \sum_{m \in {\bf M}_1}\sum_{\tau_{i,j} \in {\bf
        T}_m}\left(1+\frac{D_k}{T_i}\right)E_{i,j} > |{\bf
      M}_1|D_k    \nonumber\\
    \Rightarrow & |{\bf M}_1|\frac{E_{k,\ell}}{D_k} + \sum_{m \in {\bf M}_1}\sum_{\tau_{i,j} \in {\bf
        T}_m}\left(\frac{E_{i,j}}{D_k}+\frac{E_{i,j}}{T_i}\right) > |{\bf
      M}_1| \label{eq:fbb-arbitrary-violate-1}
  \end{align}
  By the violation of Eq.~\eqref{eq:fbb-arbitrary-2}, we
  know that
  \begin{equation}
    \label{eq:fbb-arbitrary-violate-2}
    |{\bf M}_2|\frac{E_{k,\ell}}{T_k} + \sum_{m \in {\bf M}_2}\sum_{\tau_{i,j} \in {\bf
        T}_m} \frac{E_{i,j}}{T_{i,j}} > |{\bf M}_2|
  \end{equation}
  By Eqs.~\eqref{eq:fbb-arbitrary-violate-1}
  and~\eqref{eq:fbb-arbitrary-violate-2}, the definition
  $\sum_{j=1}^{m_i}E_{i,j} = C_i'$, and the fact that $\tau_{i,j}$ is
  assigned either on a processor of ${\bf M}_1$ or on a processor of
  ${\bf M}_2$ if $\tau_{i,j}$ is assigned successfully prior to
  $\tau_{k,\ell}$, we know that {\small \begin{align}
      &  M \frac{E_{k,\ell}}{\min\{T_k, D_k\}} + \sum_{i=1}^{k} \left(\frac{C_i'}{T_i} + \frac{C_i'}{D_k}\right) > M
    \end{align}} 
By Lemma~\ref{lemma-correct-classification}, $E_{k,\ell}
  \leq \gamma L_k$ and
   \mbox{$C_i' \leq
  (1+\frac{1}{\gamma-1}) C_i$,} the above inequality implies also  
  {\small
  \begin{align} & M \frac{\gamma L_k}{\min\{T_k, D_k\}} + \sum_{i=1}^{k} \left(\frac{(1+\frac{1}{\gamma-1}) C_i}{T_i} +
        \frac{(1+\frac{1}{\gamma-1})C_i}{D_k}\right) > M
    \end{align}
  } Let $X$ be $\max\left\{\frac{L_k}{\min\{T_k, D_k\}},
    \sum_{i=1}^{k} \frac{C_i}{M T_i}, \sum_{i=1}^{k} \frac{C_i}{M
      D_k}\right\}$. Therefore, we know that\footnote{The setting of $\gamma$ as $1+\sqrt{2}$ is in fact to maximize $\frac{\gamma-1}{\gamma^2+\gamma}$.}
  \begin{align}
&    \gamma X + 2 \left(1+\frac{1}{\gamma-1}\right)X > 1    \\
\Rightarrow\quad & X > \frac{1}{\gamma + \frac{2\gamma}{\gamma-1}} = \frac{\gamma-1}{\gamma^2+\gamma} = \frac{\sqrt{2}}{4+3\sqrt{2}} = \frac{1}{3+2\sqrt{2}}
  \end{align}
  Since $D_i \leq D_k$ under deadline-monotonic partitioning, we know
  that the task system is not schedulable at speed $X$. Therefore, the
  speedup factor of the reservation-based federated Scheduling is at most $3+2\sqrt{2}$.
\end{proof}

\begin{theorem}
  \label{thm:arbitrary-EDF}
  A system of arbitrary-deadline DAG tasks scheduled by
  \textbf{reservation-based federated scheduling} under partitioned EDF
  admits a constant speedup factor of $3+2\sqrt{2}$ with respect to
  any optimal scheduler by setting $\gamma$ to $1+\sqrt{2}$.
\end{theorem}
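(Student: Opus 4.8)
The plan is to mirror the argument of Theorem~\ref{thm:arbitrary-DM}, replacing the fixed-priority deadline-monotonic schedulability condition by the corresponding condition for partitioned EDF. First I would again invoke Algorithm~\ref{alg:ert} with $\gamma = 1+\sqrt{2}$; if some DAG task has $(1+\sqrt{2})L_i > D_i$ the speedup factor is trivially $1+\sqrt{2} < 3+2\sqrt{2}$, so we may assume $\gamma L_i \le D_i$ for every task, which by Lemma~\ref{lemma-correct-classification} gives $E_{i,j} \le \gamma L_i$ and $C_i' \le (1+\frac{1}{\gamma-1})C_i$ for all reservation tasks.

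Next I would set up the failure analysis exactly as in Theorem~\ref{thm:arbitrary-DM}. Suppose reservation task $\tau_{k,\ell}$ cannot be partitioned onto any of the $M$ processors under the first-fit (or any reasonable) EDF partitioning that processes tasks in nondecreasing deadline order. For EDF the relevant sufficient test on a processor is the demand-based test whose approximated form (from Fisher--Baruah--Baker / Bini et al.\ applied to EDF) is structurally identical to Eqs.~\eqref{eq:fbb-arbitrary-3}--\eqref{eq:fbb-arbitrary-4}: a density/utilization-type bound plus the pure utilization bound $\sum U_i \le 1$. So I would split the processors into ${\bf M}_1$, where the density-type condition fails, and ${\bf M}_2$, where it holds but $\sum U_i \le 1$ fails, with $|{\bf M}_1| + |{\bf M}_2| = M$. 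Summing the violated inequalities over ${\bf M}_1$ and ${\bf M}_2$, using that every successfully-assigned $\tau_{i,j}$ lies on exactly one processor and $\sum_j E_{i,j} = C_i'$, and that all prior tasks $\tau_i$ have $D_i \le D_k$ (deadline-monotonic order), I get an inequality of the form
\[
M\,\frac{E_{k,\ell}}{\min\{T_k,D_k\}} + \sum_{i=1}^{k}\left(\frac{C_i'}{T_i} + \frac{C_i'}{D_k}\right) > M.
\]
Substituting $E_{k,\ell} \le \gamma L_k$ and $C_i' \le (1+\frac{1}{\gamma-1})C_i$, then dividing by $M$ and defining $X = \max\{\frac{L_k}{\min\{T_k,D_k\}}, \sum_{i=1}^k \frac{C_i}{MT_i}, \sum_{i=1}^k \frac{C_i}{MD_k}\}$, yields $\gamma X + 2(1+\frac{1}{\gamma-1})X > 1$, hence $X > \frac{\gamma-1}{\gamma^2+\gamma} = \frac{1}{3+2\sqrt{2}}$ at $\gamma = 1+\sqrt{2}$. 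Since any optimal scheduler needs work at least $\sum_{i=1}^k C_i/T_i$, critical-path room at least $L_k$ within $\min\{T_k,D_k\}$, and demand at least $\sum_{i=1}^k C_i$ within $D_k$, the original DAG system is infeasible at any speed below $X$, so the speedup factor is at most $3+2\sqrt{2}$.

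The main obstacle is not the algebra — which is word-for-word the computation already done in Theorem~\ref{thm:arbitrary-DM} — but justifying that partitioned EDF admits a sufficient per-processor schedulability test of precisely the two-part form used above, so that the same ${\bf M}_1/{\bf M}_2$ case split goes through. I would resolve this by citing the demand-bound-function approximation for EDF (the load-based test of Baruah--Fisher and its refinement by Bini et al.), noting that its failure on a processor implies exactly Eq.~\eqref{eq:fbb-arbitrary-violate-1}-type and Eq.~\eqref{eq:fbb-arbitrary-violate-2}-type inequalities; once that is in place, the remainder is identical to the DM proof and the constant $3+2\sqrt{2}$ at $\gamma = 1+\sqrt{2}$ follows verbatim.
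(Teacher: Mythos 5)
Your plan is workable, but it takes a genuinely different and considerably heavier route than the paper. The paper's entire proof is one sentence: since EDF is an optimal uniprocessor scheduling policy, any per-processor task set that passes the fixed-priority (DM) schedulability test used in Theorem~\ref{thm:arbitrary-DM} is a fortiori schedulable under EDF on that processor; hence one runs the \emph{identical} partitioning algorithm with the \emph{identical} per-processor test, and the failure analysis --- including the ${\bf M}_1/{\bf M}_2$ split, the inequality $M\frac{E_{k,\ell}}{\min\{T_k,D_k\}} + \sum_i(\frac{C_i'}{T_i}+\frac{C_i'}{D_k}) > M$, and the bound $X > \frac{1}{3+2\sqrt{2}}$ --- carries over verbatim with nothing to re-derive. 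You instead propose to replace the per-processor test by an EDF-specific demand-bound-function test and re-justify that its failure yields inequalities of the form of Eqs.~\eqref{eq:fbb-arbitrary-violate-1} and \eqref{eq:fbb-arbitrary-violate-2}; you correctly flag this as the main obstacle but leave it at the level of a citation. That step does in fact go through (the approximate demand bound $C_i + U_i(D_k - D_i)$ is upper-bounded by $(1+\frac{D_k}{T_i})C_i$, so failure of the EDF test implies the same violated inequality), so your argument can be completed, and it would even permit a less pessimistic partitioning test in practice; but for the purpose of establishing the $3+2\sqrt{2}$ bound it buys nothing, and the optimality-of-EDF observation makes the entire obstacle you identify disappear.
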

\begin{proof}
  Since EDF is an optimal uniprocessor scheduling policy with respect
  to schedulability, the same task partitioning algorithm and analysis
  used in Theorem~\ref{thm:arbitrary-DM} yield the result directly.
\end{proof}

\subsection*{Future research}
We will design concrete algorithms, that create 
\emph{non-equal} reservation budgets and compare their competitiveness 
against the \emph{R-EQUAL} algorithm. Further we want to analyse the 
performance of the proposed reservation based DAG task scheduling in 
global scheduling algorithms.
Finally the incorporation of self-suspending behaviour of the reservation 
servers may yield analytic and practical benefits, since in our current 
approach the \emph{worst-case} DAG task structure has to be assumed in 
order to provide provably sufficient resources. This is often too 
pessimistic and self-suspending behaviour can potentially help 
to service the \emph{actual} demands more precisely without spinning 
 and blocking resources unused.
\bibliographystyle{plain}
\bibliography{rtbib,real-time} 
\end{document}